\declaretheorem[style=definition]{definition}
\declaretheorem[style=plain]{theorem}
\declaretheorem[style=plain]{lemma}
\begin{document}
\title{Equivalence relations on ecosystems}
\author{Matthew Spencer\\School of Environmental Sciences, University of Liverpool, Liverpool, L69, 3GP, UK.\\m.spencer@liverpool.ac.uk}
\maketitle

\section*{Abstract}
In abstract terms, ecosystem ecology is about determining when two ecosystems, superficially different, are alike in some deeper way. An external observer can choose any ecosystem property as being important. In contrast, two ecosystems are equivalent from the point of view of the organisms they contain if and only if for each species, the proportional population growth rate does not differ between the ecosystems. Comparative studies of ecosystems should therefore focus on patterns in proportional population growth rates, rather than patterns in other properties such as relative abundances. Popular activities such as measuring dissimilarity, and representing dissimilarity via ordination, can then be done from the point of view of the organisms in ecosystems. Summarizing the state of an ecosystem under this approach remains challenging. In general, the dynamics on equivalence classes of ecosystems defined in this way are structurally different from the dynamics of ecosystems as seen by an external observer. This may limit the extent to which natural selection can act on ecosystem structure. 

\section{Introduction}

In abstract terms, ecosystem ecology is about determining when two ecosystems, superficially different, are alike in some deeper way. Previous authors have chosen properties such as abundances \citep[p. 7]{Ginzburg83}, relative abundances \citep[p. 328]{Legendre12}, diversity \citep{Jost06} and complexity of energy flow pathways \citep{Ulanowicz86} as the relevant ways in which ecosystems may be alike. Since there are many such properties, it is worth thinking about which ones to choose. In order to make this choice in a principled way, it is necessary to distinguish between an external observer of the ecosystem, who can decide which properties matter to them, and an organism in the ecosystem, for whom the choice of properties is fixed.

The properties that matter to an organism are closely related to its Hutchinson niche. \citet{Hutchinson57} defined the niche of an organism as the set of states of the environment permitting a species to persist indefinitely. Later work makes it clear that by ``persist indefinitely'', it was meant that the proportional population growth rate for the species was non-negative \citep[p. 194]{Hutchinson78}, where proportional population growth rate is the per-capita population growth rate when abundance is measured in individuals, the mass-specific population growth rate when measured as biomass, and so on. This initial view of niche space was essentially static. \citet{Maguire73} introduced both structure and dynamics into niche space. Structure was provided by level sets (contours, if niche space is two-dimensional) of equal proportional population growth rate, and dynamics by movement through niche space, driven either by external changes, or as a consequence of population growth. Maguire explicitly stated that this view of niche space allows us to examine ``the total environment of a species, a population, or an individual \ldots through its `biological eyes'\thinspace'', in other words as an organism within the ecosystem would see it, rather than as an external observer. Of course, organisms do not ``see'' population growth rate, so that the ``biological eyes'' of a species must be interpreted as the outcome of the process relating population growth rate to environment.

Exponential growth of a population occurs when ``nothing happens in the environment'' \citep{Ginzburg86}. However, it is worth thinking about what ``nothing'' means. \citet{Maguire73}, and later \citet{Tilman80}, concentrated on movement through niche space, from one level set to another (Figure \ref{fig:contours}, solid arrow). Since proportional population growth rate is changing, population growth is not exponential, and something is happening to the environment. A situation not considered in detail by \citet{Maguire73} and \citet{Tilman80} is shown by the dashed arrow in Figure \ref{fig:contours}. Here, the environment is changing, but in such a way that we remain in the same level set in niche space. Although this is unlikely to occur in nature, it is important conceptually. In such cases, ``nothing happens'' from the point of view of the species, even though to an external observer, something is happening. In general, two ecosystems which are superficially different can be equivalent from the point of view of a species if they are in the same level set in niche space and therefore lead to the same proportional population growth rate. In other words, proportional population growth rates are the natural units in which to measure ecosystem state from the point of view of an organism, and there is no necessary connection between these natural units and the properties of the system that seem most natural to an external observer \citep[p. xii]{Rosen78}.

This natural definition of ecosystem state suggests some immediate questions. First, applying this definition across all species in an ecosystem leads to a high-dimensional state. Ecosystem properties visible to an external observer, such as diversity, are often summarized in a low-dimensional way, for example using diversity indices \citep{Jost06}. Can the same be done for the natural measure of ecosystem state? Second, the relationship between the dynamics of an ecosystem (including all properties that an external observer could measure) and the dynamics of ecosystem states (as experienced by organisms) may have important consequences for attempts to explain patterns in ecosystem structure \citep{Borrelli15}. Natural selection cannot distinguish between groups of organisms with the same proportional population growth rates. As a result, there will be variation among ecosystems (visible to an external observer) on which natural selection cannot act. To what extent does this limit the role of natural selection as an explanation for ecosystem structure?

Here, I define equivalence of ecosystems from the point of view of the organisms involved, in terms of proportional population growth rates. I outline the relationship between dynamics on equivalence classes of ecosystems (from the point of view of organisms) and dynamics as seen by an external observer. I identify classes of ecosystems differing in the relationship between these two kinds of dynamics. I discuss the consequences of these ideas for comparative studies of ecosystems, summaries of ecosystem state, and the mechanisms that may generate regularities at the ecosystem level.

\section{Equivalence from the point of view of organisms}
\begin{definition}
Let $\Omega$ be a volume in two- or three-dimensional physical space. Let $\mathbf x = (x_1, x_2, \ldots, x_n)$ be the abundances (e.g. numbers of individuals, if individuals are well-defined, or cover or biomass otherwise) of all the $n$ species present in $\Omega$ ($x_i \in \mathbb R_{>0}, i = 1, \ldots, n$). Let $\mathbf y = {y_1, y_2, \ldots, y_m}$ be the values of all the physicochemical variables affecting any of these species ($y_i \in \mathbb R, i = 1, \ldots, m$). Then $s = \{\Omega, \mathbf x, \mathbf y \}$ is an \emph{ecosystem}.
\label{def:ecosystem}
\end{definition}

Definition \ref{def:ecosystem} is not greatly different from standard usage, but it is necessary to have a precise definition. Let $S$ be the set of ecosystems in which exactly the same set of $n$ species are present. Let $\alpha$ be an endomap of $S$ \citep[a function with domain and codomain $S$:][p. 15]{Lawvere09}, describing ecosystem dynamics within $S$. In what follows, I assume for simplicity that dynamics operate in discrete time. Essentially the same arguments as those below can be made in continuous time, the only difference being that there must then be an endomap $\alpha_t$ for each real number $t$, satisfying $\alpha_0 = 1_S$ (the identity in $S$) and $\alpha_{t + u} = \alpha_t \circ \alpha_u$, i.e. the composition $\alpha_t$ following $\alpha_u$ \citep[p. 169]{Lawvere09}.

Let $r_i: S \to \mathbb R$ be a function such that $r_i(s)$ is the contribution of endogenous processes (e.g. births and deaths) to the proportional growth rate of the $i$th species. In a finite population, this is interpreted as the expected value over demographic stochasticity. There is no need to consider environmental stochasticity, because by definition, all the variables that affect $r_i$ are specified in $s$. I do not require that $s$ is a closed ecosystem, but I do not include immigration and emigration in $r_i(s)$. This is consistent with the view that immigration and emigration should not be considered when determining the suitability of an environment for a species, which resolves some of the problems with connecting the definition of a niche to the distribution of a species \citep{Drake17}. 

To the $i$th species, two ecosystems $s, s' \in S$ are equivalent if and only if $r_i(s) = r_i(s')$. As argued above, when this condition is satisfied, the two ecosystems lie in the same level set in niche space for the $i$th species, so that from the point of view of the species, ``nothing happens'' if we move from one to the other, even though the ecosystems may appear different to an external observer. In other words, a unit of abundance of the $i$th species in ecosystem $s$ would neither benefit nor suffer in evolutionary terms if exchanged with a unit of abundance of $i$ from ecosystem $s'$. Let $\sim_i$ be the relation defined on $S$ by $s \sim_i s'$ if and only if $r_i(s) = r_i(s')$. This is an equivalence relation because it is reflexive, symmetric and transitive \citep[p. 28]{Halmos74}. The elements of the quotient set $S / {\sim}_i$ (the equivalence classes of $\sim_i$ in $S$) are the level sets in niche space for species $i$, provided that the abundance of any species having a direct effect on $r_i$ is included as a niche axis \citep{Maguire73}.

\begin{definition}
Let $r$ be the function
\begin{equation*}
\begin{aligned}
r: S &\to \mathbb R^n \\
s &\mapsto (r_1(s), \ldots , r_n(s)),
\end{aligned}
\end{equation*}
which maps ecosystems to $n$-tuples of real numbers representing proportional population growth rates of all species. Because the set of such $n$-tuples is important, it is worth giving it a symbol ($R$) and a name: the growth space of the ecosystem \citep{Spencer15}. Let $\sim$ be the relation on $S$ defined by $r$ (i.e. $s \sim s'$ means that $r(s) = r(s')$). Again, this is reflexive, transitive and symmetric, so it is an equivalence relation. Then I say that ecosystems $s, s' \in S$ are \emph{equivalent} (from the point of view of every species) if and only if $s \sim s'$. In other words, two ecosystems are equivalent if and only if for each species, the proportional population growth rate does not differ between the ecosystems.
\end{definition}

The elements of the quotient set $S / {\sim}$ are the intersections of the quotient sets $S / {\sim}_1, \ldots, S / {\sim}_n$, i.e. $S / {\sim} = S / \left( \cap_{i=1}^n {\sim}_i \right)$. In biological terms, these are the intersections of a given set of level sets for each species in niche space. Note that some of these intersections may be empty. Studying intersections of sets in niche space has been productive in the past. For example, \citet{Hutchinson57} focused on intersections of the sets $r_i \geq 0$, and \citet{Tilman80} focused on intersections of the level sets $r_i = 0$, in order to study the potential for coexistence. However, the intersections of other level sets are also biologically important, a point I return to in the discussion.

I do not assume that either $\alpha$ or $r$ has any particular form. In general, the equations describing ecosystem dynamics are unknown. For example, the Lotka-Volterra equations can usefully be thought of as a second-order Taylor polynomial approximation to some more complicated system \citep[p. 117]{Hutchinson78}, but there are few situations in which one would believe that these are the true equations. It is possible to constrain the form of the functions $r_i$ based on a few axioms \citep{Cropp15}. Although I do not follow this up here, it may lead to a deeper understanding of the range of possible dynamics on equivalence classes.

I also do not assume that specifying $r$ is sufficient to specify $\alpha$. Although endogenous dynamics are important, immigration and emigration of organisms, and external factors influencing environmental conditions, must also be specified in order to know the future state of an ecosystem. Closed ecosystems have received more theoretical attention, but ecosystems with input and output of of nutrients and organisms can have qualitatively different dynamics \citep{Loreau04}.

It is worth emphasizing again that it is proportional growth rates and nothing else that determine whether two ecosystems are equivalent from the point of view of the organisms involved. For example, two ecosystems which happen to have the same abundance of every species (and may therefore be viewed as equivalent by an external observer) may or may not be equivalent to the organisms involved. On the other hand, each internal equilibrium of a deterministic system is a member of $S$, and all these equilibria are equivalent, since $r_i(s) = 0$ for $i = 1, \ldots, n$, by definition. Thus what are usually called alternative stable states are invisible (in evolutionary terms) to the organisms involved, if every species is present in each of these states.

\section{The category of sets with endomaps}

A category can be thought of as a set of objects $A, B, C, \ldots$ and a set of arrows $f, g, h, \ldots$, such that:
\begin{enumerate}
\item Each arrow $f$ has some object $A$ as its domain (source) and some object $B$ as its codomain (target);
\item There is an identity, consisting of an arrow $1_A$ for each object $A$ with domain and codomain $A$;
\item Any pair of arrows $f, g$ such that the codomain of $f$ is the domain of $g$ can be composed to form a composite arrow $g \circ f$ from the domain of $f$ to the codomain of $g$;
\item Composition is associative, i.e. $h \circ (g \circ f) = (h \circ g) \circ f$;
\item Composition satisfies the unit laws, that for arrows $f$ with codomain $B$, and $g$ with domain $B$, $1_B \circ f = f$ and $g \circ 1_B = g$;
\end{enumerate}
\citep[p. 21]{Lawvere09}.

For example, a set of ecosystems $S$ with an endomap $\alpha$ is an object in the category of sets with endomaps \citep[p. 136]{Lawvere09}. An arrow $f$ in this category from a set $X$ with endomap $\gamma$ to a set $Y$ with endomap $\delta$ must preserve the structure of the endomap, in the sense that it must satisfy
\begin{equation}
  f \circ \gamma = \delta \circ f.
  \label{eq:map}
\end{equation}

Intuitively, this means that we can either follow dynamics on $X$ and then map the result to $Y$, or map to $Y$ and then follow the corresponding dynamics of the result on $Y$. Thus the dynamical structure on $X$ defined by the endomap $\gamma$ is preserved in the structure on $Y$ defined by the endomap $\delta$.

\section{Can dynamics on $R$ have the same structure as dynamics on $S$?}
Dynamics on $S$ (visible to an external observer) induce dynamics on $R$ (as experienced by organisms in the ecosystem). We want to know whether these dynamics have the same structure, in the sense of Equation \ref{eq:map}. To answer this question, we must first attempt to specify an endomap $\beta$ on $R$ that describes these dynamics.

A natural choice for $\beta$ is a map taking $r(s)$ to $(r \circ \alpha)(s)$ (the outcome of dynamics on ecosystems, mapped to $R$), if such a map exists. Thus, suppose that $z \in r(S)$. To get from $z = r(s)$ to $\beta(z) = (r \circ \alpha)(s)$, we have to first go back to $S$, then apply $\alpha$ and finally go from the result of this to $R$. The function $r$ is not in general one-one, so it will not in general have a retraction $\tilde{r}$ that undoes it in the sense that $\tilde{r} \circ r = 1_S$ \citep[p. 53]{Lawvere09}. However, we can construct the function
\begin{equation*}
  \begin{aligned}
    r': r(S) &\to S \\
    z &\mapsto s^*, \\
  \end{aligned}
\end{equation*}
where $s^*$ is an arbitrary representative of the set $\{s \in S: r(s) = z\}$. Then if the function $r \circ \alpha \circ r'$ exists, it is the natural choice for $\beta$ on $r(S)$. For elements of $R$ outside the image set of $S$ under $r$, we can define $\beta$ in an arbitrary way, say $\beta = 1_R$.

It is clear that we will not always be able to construct $\beta$ in this way. In fact, if we cannot, then there is no endomap on $R$ such that $r$ is a structure-preserving map from $S$ to $R$.

\begin{theorem}
The map $r: S \to R$ can be a structure-preserving map if and only if the endomap $\alpha$ on $S$ satisfies the condition that
\begin{equation}
s \sim s' \implies \alpha(s) \sim \alpha(s') \quad \forall s, s' \in S.
\label{eq:structcond}
\end{equation}
\label{theorem:structcond}
\end{theorem}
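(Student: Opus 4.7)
The statement is a biconditional, so the plan is to prove each direction separately, and the whole argument is essentially the universal property of the quotient $S \to S/{\sim}$ translated into the language of endomaps. Since $\sim$ is defined by $s \sim s' \iff r(s) = r(s')$, the map $r$ factors through $S/{\sim}$ and is injective on equivalence classes; this is the underlying reason the condition in Equation \ref{eq:structcond} controls the existence of $\beta$.

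For the forward direction (easy), I would assume that some endomap $\beta: R \to R$ makes $r$ structure-preserving, i.e.\ $r \circ \alpha = \beta \circ r$. If $s \sim s'$, then $r(s) = r(s')$, so applying $\beta$ to both sides gives $\beta(r(s)) = \beta(r(s'))$, which by Equation \ref{eq:map} is the same as $r(\alpha(s)) = r(\alpha(s'))$. This is precisely $\alpha(s) \sim \alpha(s')$.

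For the reverse direction, I would assume Equation \ref{eq:structcond} and construct $\beta$ explicitly along the lines already sketched in the paper. Choose an arbitrary representative function $r': r(S) \to S$ with $r(r'(z)) = z$, and define $\beta$ on $r(S)$ by $\beta(z) = r(\alpha(r'(z)))$; set $\beta = 1_R$ on $R \setminus r(S)$. To verify $r \circ \alpha = \beta \circ r$, pick any $s \in S$, let $z = r(s)$, and let $s^* = r'(z)$. Then $r(s) = r(s^*)$, so $s \sim s^*$, and by the hypothesis $\alpha(s) \sim \alpha(s^*)$, which gives $r(\alpha(s)) = r(\alpha(s^*)) = \beta(r(s))$. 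Hence Equation \ref{eq:map} holds with $\gamma = \alpha$ and $\delta = \beta$.

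The only subtle point, and the step I expect to need the most care to state clearly, is that the candidate formula $\beta(z) = r(\alpha(r'(z)))$ appears to depend on the arbitrary choice of representative $r'$, yet the resulting $\beta$ is well-defined on $r(S)$ precisely because Equation \ref{eq:structcond} forces $r \circ \alpha$ to be constant on each equivalence class of $\sim$. Thus condition \ref{eq:structcond} is not merely sufficient but is exactly what is needed to legitimise the construction; without it, different choices of representative would give different values of $\beta(z)$, and no endomap on $R$ could make $r$ a morphism in the category of sets with endomaps. The extension of $\beta$ to $R \setminus r(S)$ is genuinely arbitrary and plays no role in the verification of Equation \ref{eq:map}, since that equation only involves $\beta$ evaluated on the image of $r$.
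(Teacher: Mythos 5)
Your proposal is correct and follows essentially the same route as the paper: you construct $\beta$ on $r(S)$ via an arbitrary representative map $r'$ and use Condition \ref{eq:structcond} to verify $r \circ \alpha = \beta \circ r$, and your direct argument for the ``only if'' direction is just the contrapositive of the paper's argument that no endomap $\gamma$ can send $r(s) = r(s')$ to two distinct values. Your explicit remark on why the construction is independent of the choice of representative is a slightly more careful statement of what the paper asserts implicitly when it claims that Equation \ref{eq:naturalRmap} ``associates a single element of its codomain with each element of its domain''.
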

\begin{proof}
First, I show that if Condition \ref{eq:structcond} holds, then the endomap $\beta$ on $R$ is structure-preserving. If the condition holds, then by the definition of $\sim$, $s \sim s' \implies (r \circ \alpha)(s) = (r \circ \alpha)(s')$. Then
\begin{equation}
\begin{aligned}
\beta: R &\to R \\
z & \mapsto \begin{cases} (r \circ \alpha)(s) & \text{if}\ z \in r(S),\\
  z & \text{otherwise}
  \end{cases}
\end{aligned}
\label{eq:naturalRmap}
\end{equation}
is a valid endomap on $R$ (because it has domain and codomain $R$, and associates a single element of its codomain with each element of its domain). It also satisfies $r \circ \alpha = \beta \circ r$, and is therefore structure-preserving.

Now, I show that if Condition \ref{eq:structcond} does not hold, then there cannot be any endomap on $R$ such that $r$ is structure-preserving. Suppose that there exist $s, s' \in S$ such that $s \sim s'$, but $\alpha(s) \not \sim \alpha(s')$. Then by the definition of $\sim$, $r(s) = r(s')$, but $(r \circ \alpha)(s) \neq (r \circ \alpha)(s')$. There cannot be any function $\gamma$ that maps $r(s) = r(s')$ to both $(r \circ \alpha)(s)$ and $(r \circ \alpha)(s') \neq (r \circ \alpha)(s)$ when these elements are distinct, and hence it is not possible to satisfy $r \circ \alpha = \gamma \circ r$.

I have shown that if Condition \ref{eq:structcond} holds, then there is a natural choice of endomap $\beta$ such that $r$ is a structure-preserving map from $S$ to $R$, and that if it does not hold, then there can be no such map.
\end{proof}

Theorem \ref{theorem:structcond} makes intuitive sense. Condition \ref{eq:structcond} says that for dynamics on the set of equivalence classes to preserve the structure in ecosystem dynamics, the ecosystem dynamics must not separate equivalence classes. For example, in Figure \ref{fig:endomap}a, the structure of $\alpha$ can be preserved by $r$, because $\alpha$ keeps members of equivalence classes together, while in Figure \ref{fig:endomap}b, the structure of $\alpha$ cannot be preserved by $r$ because $s$ and $s'$ are in the same equivalence class but are mapped by $\alpha$ to different equivalence classes. Condition \ref{eq:structcond} is somewhat analogous to the condition under which a function of a Markov chain will be Markovian \citep{Burke58}.

To find examples of endomaps $\alpha$ satisfying Condition \ref{eq:structcond}, I first construct a function $\phi: S \to S$ such that $s \sim s' \iff \phi(s) = \phi(s')$.
\begin{lemma}
Let $\phi$ be the function
\begin{equation*}
\begin{aligned}
\phi: S &\to S \\
s &\mapsto s^* \, \text{such that}\, s \sim s^*,
\end{aligned}
\end{equation*}
i.e. $s^*$ is any fixed representative of the equivalence class of $s$ on $S$. Then $s \sim s' \iff \phi(s) = \phi(s')$.
\end{lemma}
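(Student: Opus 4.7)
The plan is to verify both implications directly from how $\phi$ is constructed, using only that $\phi(s)$ is, by definition, a fixed representative of the $\sim$-equivalence class containing $s$ (so the choice depends only on the class, not on the particular element chosen from it), together with the three defining properties of the equivalence relation $\sim$ (reflexivity, symmetry, transitivity) already established in the paper.

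For the forward direction, I would start from $s \sim s'$, which means $s$ and $s'$ lie in the same equivalence class of $\sim$ on $S$. Since $\phi$ assigns to each element the single fixed representative of its class, both $s$ and $s'$ must be sent to the same representative, giving $\phi(s) = \phi(s')$. The only subtlety here is the well-definedness of $\phi$, which should be stated explicitly: for each equivalence class $[s]$, one picks once and for all an element $s^* \in [s]$, and then sets $\phi(t) = s^*$ for every $t \in [s]$.

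For the reverse direction, I would use that $s \sim \phi(s)$ and $s' \sim \phi(s')$ both hold by construction of $\phi$. Assuming $\phi(s) = \phi(s')$, transitivity gives $s \sim \phi(s) = \phi(s') \sim s'$, and then a final application of transitivity (after symmetry on the right) yields $s \sim s'$.

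I do not anticipate any real obstacle here; the statement is essentially the observation that the map from $S$ to its quotient $S/{\sim}$, followed by a section (a choice of representative) from the quotient back into $S$, is constant on equivalence classes and distinguishes different classes. The only thing worth flagging carefully is that $\phi$ is not a canonical map but depends on an arbitrary choice of representatives, so the proof should make clear that this choice is made once per class rather than once per element — otherwise the forward implication fails.
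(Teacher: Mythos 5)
Your proof is correct and takes essentially the same approach as the paper: the forward direction is identical (both elements of a class map to the same fixed representative), and your explicit remark that the representative must be chosen once per class, not once per element, is a worthwhile clarification of well-definedness. The only cosmetic difference is in the converse: you argue directly via $s \sim \phi(s) = \phi(s') \sim s'$ and transitivity, whereas the paper proves the contrapositive using the fact that an equivalence relation partitions $S$; these are interchangeable.
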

\begin{proof}
If $s \sim s'$, then $\phi(s) = \phi(s') = s^*$. Conversely, if $s \not \sim s'$, then $\phi(s) = s^*$, but $\phi(s') \neq s^*$, since an equivalence relation on $S$ partitions $S$ \citep[p. 28]{Halmos74}, so that $s^*$ cannot be equivalent to both $s$ and $s'$.
\end{proof}
We can now rewrite Condition \ref{eq:structcond} as
\begin{equation}
\phi(s) = \phi(s') \implies (\phi \circ \alpha )(s) = (\phi \circ \alpha )(s') \quad \forall s, s' \in S.
\label{eq:structcond2}
\end{equation}

\section{Classes of ecosystem dynamics}

It is useful to distinguish three classes of ecosystem dynamics, based on whether and how Condition \ref{eq:structcond2} is satisfied:
\begin{enumerate}[(a)]

\item \label{class:simplest} Condition \ref{eq:structcond2} holds because $\phi \circ \alpha = \phi$, so $r$ is a map in the category of sets with endomaps. Some of the possible ways this could occur are:
\begin{enumerate}[(i)]
\item If $\alpha = 1_S$, then $\phi \circ \alpha = \phi$, and Condition \ref{eq:structcond2} holds. This is the trivial case in which ecosystems never change. 

\item Note that $\phi$ is idempotent (i.e. $\phi \circ \phi = \phi$), since $(\phi \circ \phi)(s) = \phi(s^*) = s^* = \phi(s)$, for any $s \in S$. Hence $\alpha = \phi$ also satisfies Condition \ref{eq:structcond2}, and is not equal to $1_S$, provided that at least one equivalence class has more than one member. There is no obvious biological example of this case. 

\item If resource levels change over time, but in such a way that $r$ remains constant (as in Figure \ref{fig:contours}, dashed arrow), then $\alpha \neq 1_S$, but $\phi \circ \alpha = \phi$. This could in principle be achieved in a controlled laboratory system, but does not appear likely in nature.

\item Finally and most importantly, consider an infinite well-mixed space $\Omega$, and a set of species interacting only through resource depletion and production of waste products. The proportional growth rate of each species depends on physicochemical variables $\mathbf y$ and will not in general be zero. Thus in a closed system, we expect abundances $\mathbf x$ to change over time, so $\alpha \neq 1_S$. Furthermore, because $\Omega$ is infinite and well-mixed, $\mathbf y$ does not change over time, so proportional growth rates do not change over time and the abundance of each species grows or declines exponentially. Thus in this case, ecosystems change, while remaining in the same equivalence class, and Condition \ref{eq:structcond2} is satisfied. This Malthusian situation is an important starting point for theory, analogous to the role of a body with no forces acting on it in physics \citep{Ginzburg86}. In the real world, a similar situation can be realized in a chemostat in which the ecosystem is open and proportional population growth rates are constant but not necessarily zero, while abundances in the system do not change.

\end{enumerate}

\item \label{class:endomap} Condition \ref{eq:structcond2} holds even though $\phi \circ \alpha \neq \phi$. In other words, ecosystems change equivalence class over time, but these dynamics keep members of the same equivalence class together, so that $r$ is a map in the category of sets with endomaps. There are several possible examples.

\begin{enumerate}[(i)]
\item \label{class:SDM} Suppose that $r$ depends on $s$ only as a one-one function of a single physicochemical variable $y$, and that changes in abundances $\mathbf x$ do not affect $y$. Then each equivalence class of $S$ contains ecosystems with a single value of $y$, but potentially differing in $\mathbf x$. Changes in $y$ will lead to dynamics among equivalence classes, but the members of an equivalence class will stay together. In idealized stream or soil ecosystems, $y$ could represent detritus, and $\mathbf x$ detritivores with pure donor-controlled dynamics \citep[p. 136]{Pimm_716}, with change over time caused by variation in input and output of the resource. Alternatively, $y$ could be an environmental variable whose effects dominate all other variables, with change over time caused by extrinsic environmental variability.

\item \label{class:freqdep} Suppose that proportional population growth rates in a closed system depend only on $\mathbf x$ through the relative abundances $\bm \rho = \left( \sum_{i = 1}^n x_i \right)^{-1} \mathbf x$. Let $\bm \psi$ be a function from $\mathbb S^{n-1} \times \mathbb R_{\geq 0}$ (where the simplex $\mathbb S^{n-1}$ contains the relative abundances, and $\mathbb R_{\geq 0}$ contains a time difference) to $\mathbb R^n$. Then for some time interval $\Delta t$, ecosystem dynamics $\alpha$ are given by
\begin{equation*}
\begin{aligned}
\alpha: S & \to S, \\
(\Omega, \mathbf x, \mathbf y ) & \mapsto (\Omega, \mathbf x \odot \bm \psi(\bm \rho, \Delta t), \mathbf y ),
\end{aligned}
\end{equation*}
where $\odot$ denotes the elementwise (Hadamard) product. Hence for any given set of relative abundances $\bm \rho$, all ecosystems with abundances of the form $c \bm \rho$ for some positive number $c$ map to ecosystems with abundances of the form $c \mathbf x \odot \bm \psi(\bm \rho, \Delta t)$. Also, for each species $i$,
\begin{equation*}
r_i(s) = \lim_{\Delta t \to 0} \frac{\psi_i(\bm \rho, \Delta t) - 1}{\Delta t},
\end{equation*}
which depends on $\mathbf x$ only through $\bm \rho$. Thus all ecosystems with abundances of the form $c \bm \rho$, for fixed $\bm \rho$, will be in an equivalence class, and will be mapped to the same new equivalence class by $\alpha$. Ecosystems of this kind have purely frequency-dependent dynamics, implicitly assumed in models based only on relative abundances. \citet[section 6.1]{Arditi12} argue that this kind of scaling invariance may be a desirable property. Frequency dependence is certainly possible \citep[e.g.][pp. 134-135]{Hutchinson78}, and is sometimes likely to be important. For example, if space is limiting, and all the available space is always filled, frequency dependence may be the dominant way in which abundances affect proportional population growth rates.

\end{enumerate}

\item \label{class:notendomap} In most cases, Condition \ref{eq:structcond2} will not be satisfied, and so $r$ will not be a map in the category of sets with endomaps. For example, consider a closed ecosystem containing a single species of phytoplankton with abundance $x$, whose proportional population growth rate $(1/x) (\mathrm{d} x / \mathrm{d} t)$ depends on the concentrations of nitrogen ($N$) and phosphorus ($P$), which are used during growth but not recycled. A simple model for such an ecosystem, from \citet{Maguire73}, is
\begin{equation}
\begin{aligned}
\frac{\mathrm{d} x}{\mathrm{d} t} &= x \left( r_{\max} - \sqrt{a (N - N^*)^2 + b (P - P*)^2} \right), \\
\frac{\mathrm{d} N}{\mathrm{d} t} &= - c \frac{\mathrm{d} x}{\mathrm{d} t}, \\
\frac{\mathrm{d} P}{\mathrm{d} t} &= - d \frac{\mathrm{d} x}{\mathrm{d} t}, \\
\end{aligned}
\label{eq:Maguire}
\end{equation}
where $r_{\max}$ is the maximum possible proportional population growth rate, attained at optimum concentrations $N^*, P^*$ of nitrogen and phosphorus respectively, parameters $a$ and $b$ determine how quickly proportional population growth rate declines as nitrogen and phosphorus concentrations, respectively, are moved away from the optimum, and $c, d$ are quantities of nitrogen and phosphorus needed to produce a unit of biomass, respectively. The space $\Omega$ is not explicitly defined, but $s = \{\Omega, x, N, P \}$ is an ecosystem. Consider the map $\alpha$ defined by
\begin{equation*}
\begin{aligned}
\alpha: S &\to S, \\
\{\Omega, x_0, N_0, P_0 \} &\mapsto \{\Omega, x(1), N(1), P(1) \},
\end{aligned}
\end{equation*}
where $x_0, N_0, P_0$ are initial values, and $x(1), N(1), P(1)$ are solutions of Equation \ref{eq:Maguire} after one unit of time. Applying this map to some of the ecosystems in the equivalence class $\{ s \in S: r(s) = 1\}$ (Figure \ref{fig:Maguire}, bold black line) gives sets of ecosystems (Figure \ref{fig:Maguire}, black lines: each line corresponds to a different value of $x_0$) which cut contours of proportional population growth rate (Figure \ref{fig:Maguire}, grey lines) and thus separate equivalence classes.

\end{enumerate}

\section{Discussion}

I argued above that two ecosystems are equivalent if and only if for each species, the proportional population growth rate does not differ between the ecosystems. Much of community ecology is devoted to searching for patterns in variables such as species abundances or relative abundances, measures of diversity and measures of complexity. A consequence of my argument is that such activities will not lead to a deeper understanding of how organisms experience ecosystems. Instead, it may be more productive to search for patterns in proportional population growth rates.

Alternative stable states provide a good example of the consequences of this change of viewpoint. Determining whether a set of ecosystems has alternative stable states is a challenging problem \citep{Petraitis13}. However, if these states contain the same set of species, they are all equivalent to the organisms involved. The distinctive feature of alternative stable states is that the equivalence class in which all species have zero proportional population growth rate consists of more than one disjoint subset. Generalizing, it appears likely that when some species have nonzero proportional population growth rates, there will also exist equivalence classes consisting of more than one disjoint subset. For example, if the level sets of proportional population growth rates for two species are ellipsoids in niche space, then their intersection may consist of a pair of disjoint lower-dimensional ellipsoids. Roughly ellipsoidal level sets are expected if the axes of niche space include interactive resources \citep{Tilman80}, for which each species has a finite optimum value. This is not an outlandish biological situation.  Thus, equivalence classes consisting of disjoint sets of ecosystems with quite different physicochemical properties are plausible. This has been recognized, at least implicitly, in discussions of the ``double zero problem''. It has been claimed that shared absence of a species arising from an unsuitable environment does not provide useful information about ecosystem similarity because the environment may be unsuitable in different ways in different ecosystems \citep[e.g.][pp. 271-272]{Legendre12}. This is taking the view of an external observer who measures physicochemical conditions, rather than an organism in the ecosystem.

It is unlikely that two real ecosystems will ever be exactly equivalent. In empirical work, it may therefore be useful to measure how far two ecosystems are from being equivalent. This is analogous to the common approach of measuring dissimilarity between ecosystems \citep[chapter 7]{Legendre12}, but from the point of view of the organisms involved. What properties should be possessed by a measure of how far from equivalence two ecosystems $s_1, s_2$ are? Let $d(s_1, s_2)$ be such a measure. Convention suggests that we should have $d(s_1, s_2) \geq 0$ for all $s_1, s_2 \in S$. It will usually be sensible to require that $d(s_1, s_2) = 0$ if and only if $r(s_1) = r(s_2)$ (i.e. the two ecosystems are equivalent). There is in general no reason to privilege one ecosystem over another, so it is natural to require that $d(s_1, s_2) = d(s_2, s_1)$. A measure with all these properties is a semimetric \citep[p. 295]{Legendre12}. There are many measures with these properties, of which the most obvious is Euclidean distance, which also satisfies the triangle inequality, and is therefore a metric \citep[p. 39]{Sutherland09}: this last property may not be necessary, but is at worst harmless, and is often useful. Once a suitable measure has been chosen, popular activities such as ordination \citep[chapter 9]{Legendre12} can be done from the point of view of organisms, rather than that of an external observer. It may be useful to think of this measure as describing dissimilarity in niche space, as well as in growth space.

Given that the state $r(s)$ of an ecosystem (the $n$-tuple of proportional population growth rates for all the species it contains) is typically high-dimensional, it is natural to ask whether it can be summarized. Any function of $r(s)$ is invariant under dynamics within an equivalence class, and might therefore be considered as a summary of ecosystem state. In contrast, anything which is not a function of $r(s)$ will separate measures of the same equivalence class, and is therefore not a summary of ecosystem state. Many commonly-used measures of ``rate of succession'' such as Euclidean distances, Bray-Curtis distances and chi-square distances among relative abundances, reviewed in \citet[Appendix B]{Spencer15}, are not measures of ecosystem state, because they can take more than one value for the same value of $r(s)$, and therefore split up equivalence classes. As a first example of something that is a valid summary of ecosystem state, proportional population growth rates are likely to be unknown for most species in an ecosystem. In practice, it will be necessary to work with the $m$-tuple of proportional population growth rates that are known, where $m < n$. Since this is a function of $r(s)$, it is a summary of ecosystem state. As a second example, the Living Planet Index \citep{Loh05}, which is closely related to the mean of the elements of $r(s)$, and the shape change measure proposed by \citet{Spencer15}, which is the sample standard deviation of the elements of $r(s)$, are both scalar summaries of ecosystem state (although they were both originally presented as measures of change in ecosystem state). However, two ecosystems with the same value of one of these functions may not be equivalent from the point of view of any species. For example, two ecosystems may have the same mean of $r(s)$, even though the proportional population growth rates may differ between the ecosystems for every species. Thus, one could only claim equivalence from the point of view of a typical species, or a species chosen at random. This change in viewpoint from particular to aggregate properties demands a justification which is currently lacking, in the same way that studying biodiversity per se (an aggregate property) rather than the particular species in an ecosystem (each contributing to the aggregate property) demands a justification \citep[pp. 75-76]{Maier12}. There might be situations in conservation biology where the justification could be a ``veil of ignorance'' argument, similar to that used to agree principles of justice via the hypothetical situation in which individuals do not know their place in society \citep[p. 118]{Rawls99}. For example, an external observer might believe that species differ in their importance, even though the importances were unknown. In such a case, studying a typical species might be sensible. As far as I know, such arguments have not been developed in detail.

The structural difference between dynamics on equivalence classes of ecosystems and the dynamics of ecosystems has important consequences for the processes that generate repeatable patterns at the ecosystem level. An influential, if controversial, idea in ecosystem ecology is that ecosystems are shaped by natural selection on the ability to capture energy \citep{Lotka22}. Lotka's argument relies on the assumption that increased energy capture increases proportional population growth rate, and can therefore be subject to natural selection. Lotka proposed that such selection on energy capture leads to maximization of biomass and energy flow at the ecosystem level. However, variation within equivalence classes is invisible to natural selection, but visible to an external observer. Thus, natural selection cannot act on variation in ecosystem properties within equivalence classes. In the abstract language of \citet[p. 66]{Rosen78}, ecosystem dynamics are in general incompatible with the equivalence classes of ecosystems, and will thus appear acausal to an organism in the ecosystem. I argued above that equivalence classes may be disjoint, containing ecosystems with quite different physicochemical properties. If there are regularities at the ecosystem level, either they must be at the level of equivalence classes, or they must be generated by some mechanism other than natural selection. Stability selection is one such mechanism \citep{Borrelli15}. Stability selection acts ``without `seeing' the local environment'' \citep{Damuth18}, or in other words, it does not act via proportional population growth rates. It is therefore unlinked from the equivalence classes of an ecosystem, and may have the potential to generate regularities even within equivalence classes.

In conclusion, I distinguish between the view of ecosystems taken by an external observer, with the ability to study whatever they like, and an organism in an ecosystem, for whom only proportional population growth rates are visible. This distinction leads to major differences in the approach that should be taken to comparative studies of ecosystems, suggests open questions about how to summarize the state of an ecosystem, and clarifies the limits on the ability of natural selection to act on ecosystem structure.

\section*{Acknowledgements}

This work was funded by NERC grant NE/K00297X/1. I am grateful to Lev Ginzburg and Kevin Gross for thoughtful comments on an earlier version of this work.

\bibliographystyle{hapalike}
\bibliography{quotient}

\begin{thebibliography}{}

\bibitem[Arditi and Ginzburg, 2012]{Arditi12}
Arditi, R. and Ginzburg, L.~R. (2012).
\newblock {\em How species interact: altering the standard view on trophic
  ecology}.
\newblock Oxford University Press, Oxford.

\bibitem[Borrelli et~al., 2015]{Borrelli15}
Borrelli, J.~J., Allesina, S., Amarasekare, P., Arditi, R., Chase, I., Damuth,
  J., Hold, R.~D., Logofet, D.~O., Novak, M., Rohr, R.~P., Rossberg, A.~G.,
  Spencer, M., Tran, J.~K., and Ginzburg, L.~R. (2015).
\newblock Selection on stability across ecological scales.
\newblock {\em Trends in Ecology and Evolution}, 30:417--425.

\bibitem[Burke and Rosenblatt, 1958]{Burke58}
Burke, C.~J. and Rosenblatt, M. (1958).
\newblock A {M}arkovian function of a {M}arkov chain.
\newblock {\em Annals of Mathematical Statistics}, 29:1112--1122.

\bibitem[Cropp and Norbury, 2015]{Cropp15}
Cropp, R.~A. and Norbury, J. (2015).
\newblock Population interactions in ecology: a rule-based approach to modeling
  ecosystems in a mass-conserving framework.
\newblock {\em SIAM Review}, 57:437--465.

\bibitem[Damuth and Ginzburg, 2018]{Damuth18}
Damuth, J. and Ginzburg, L.~R. (2018).
\newblock {\em Non-adaptive selection}.
\newblock University of Chicago Press, Chicago.
\newblock In preparation.

\bibitem[Drake and Richards, 2017]{Drake17}
Drake, J.~M. and Richards, R.~L. (2017).
\newblock Estimating environmental suitability.
\newblock {\em bioRxiv preprint}, http://dx.doi.org/10.1101/109041.

\bibitem[Ginzburg, 1983]{Ginzburg83}
Ginzburg, L.~R. (1983).
\newblock {\em Theory of natural selection and population growth}.
\newblock Benjamin/Cummings, Menlo Park.

\bibitem[Ginzburg, 1986]{Ginzburg86}
Ginzburg, L.~R. (1986).
\newblock The theory of population dynamics: I. {B}ack to first principles.
\newblock {\em Journal of Theoretical Biology}, 122:385--399.

\bibitem[Halmos, 1974]{Halmos74}
Halmos, P.~R. (1974).
\newblock {\em Naive set theory}.
\newblock Springer-Verlag, New York.

\bibitem[Hutchinson, 1957]{Hutchinson57}
Hutchinson, G.~E. (1957).
\newblock Concluding remarks.
\newblock {\em Cold Spring Harbor Symposia on Quantitative Biology},
  22:415--427.

\bibitem[Hutchinson, 1978]{Hutchinson78}
Hutchinson, G.~E. (1978).
\newblock {\em An introduction to population ecology}.
\newblock Yale University Press, New Haven.

\bibitem[Jost, 2006]{Jost06}
Jost, L. (2006).
\newblock Entropy and diversity.
\newblock {\em Oikos}, 113:363--375.

\bibitem[Lawvere and Schanuel, 2009]{Lawvere09}
Lawvere, F.~W. and Schanuel, S.~H. (2009).
\newblock {\em Conceptual mathematics: a first introduction to categories}.
\newblock Cambridge University Press, Cambridge, second edition.

\bibitem[Legendre and Legendre, 2012]{Legendre12}
Legendre, P. and Legendre, L. (2012).
\newblock {\em Numerical Ecology}.
\newblock Elsevier, Amsterdam, third {E}nglish edition.

\bibitem[Loh et~al., 2005]{Loh05}
Loh, J., Green, R.~E., Ricketts, T., Lamoreux, J., Jenkins, M., Kapos, V., and
  Randers, J. (2005).
\newblock The {L}iving {P}lanet {I}ndex: using species population time series
  to track trends in biodiversity.
\newblock {\em Philosophical Transactions of the Royal Society B},
  360:289--295.

\bibitem[Loreau and Holt, 2004]{Loreau04}
Loreau, M. and Holt, R.~D. (2004).
\newblock Spatial flows and the regulation of ecosystems.
\newblock {\em American Naturalist}, 163:606--615.

\bibitem[Lotka, 1922]{Lotka22}
Lotka, A.~J. (1922).
\newblock Contribution to the energetics of evolution.
\newblock {\em Proceedings of the National Academy of Sciences of the United
  States of America}, 8:147--151.

\bibitem[Maguire, 1973]{Maguire73}
Maguire, Jr., B. (1973).
\newblock Niche response structure and the analytical potentials of its
  relationship to the habitat.
\newblock {\em American Naturalist}, 107:213--246.

\bibitem[Maier, 2012]{Maier12}
Maier, D.~S. (2012).
\newblock {\em What's so good about biodiversity?}
\newblock Springer, Dordrecht.

\bibitem[Petraitis, 2013]{Petraitis13}
Petraitis, P. (2013).
\newblock {\em Multiple stable states in natural ecosystems}.
\newblock Oxford University Press, Oxford.

\bibitem[Pimm, 1982]{Pimm_716}
Pimm, S.~L. (1982).
\newblock {\em Food webs}.
\newblock Chapman \& Hall, London.

\bibitem[Rawls, 1999]{Rawls99}
Rawls, J. (1999).
\newblock {\em A theory of justice}.
\newblock Belknap Press, Cambridge, Massachusetts, revised edition.

\bibitem[Rosen, 1978]{Rosen78}
Rosen, R. (1978).
\newblock {\em Fundamentals of measurement and representation of natural
  systems}.
\newblock North-Holland, New York.

\bibitem[Spencer, 2015]{Spencer15}
Spencer, M. (2015).
\newblock Size change, shape change, and the growth space of a community.
\newblock {\em Journal of Theoretical Biology}, 369:23--41.

\bibitem[Sutherland, 2009]{Sutherland09}
Sutherland, W.~A. (2009).
\newblock {\em Introduction to metric and topological spaces}.
\newblock Oxford University Press, Oxford, second edition.

\bibitem[Tilman, 1980]{Tilman80}
Tilman, D. (1980).
\newblock Resources: a graphical-mechanistic approach to competition and
  predation.
\newblock {\em American Naturalist}, 116:362--393.

\bibitem[Ulanowicz, 1986]{Ulanowicz86}
Ulanowicz, R.~E. (1986).
\newblock {\em Growth and development: ecosystems phenomenology}.
\newblock to{E}xcel, San Jose.

\end{thebibliography}

\clearpage

\section*{Figure legends}

Figure \ref{fig:contours}. Movement through a two-dimensional niche space, with axes representing resources $y_1$, $y_2$. Grey lines: level sets of equal proportional population growth rate. Solid arrow: movement of the type considered by \citet{Maguire73} and \citet{Tilman80}. Dashed arrow: movement within a level set.

Figure \ref{fig:endomap}. Examples of endomaps $\alpha$ on a set of ecosystems $S$ for which $r$ is (a) or is not (b) a map in the category of sets with endomaps. In each case, the horizontal divisions in $S$ represent equivalence classes, with all points in a class mapped by $r$ to the same point in growth space $R$.

Figure \ref{fig:Maguire}. An ecosystem model in which the map $\alpha$ does not preserve equivalence classes. The $x$- and $y$-axes are concentrations of nitrogen ($N$) and phosphorus ($P$) in arbitrary units, and define a two-dimensional niche space. Grey lines are contours of constant $r$ (level sets in niche space). The bold black line is the contour $r = 1$. Thin black lines are some of the values to which the contour $r = 1$ is mapped after one unit of time by Equation \ref{eq:Maguire} (each line represents a different initial abundance $x_0$, between 0 to 5). Solutions obtained numerically. Parameter values: $r_{\max} = 10, N^* = 20, P^* = 0.2, a = 1, b = 1 \times 10^4, c = 6, d = 0.12$.

\clearpage

\begin{figure}[h]
  \includegraphics[width = \textwidth]{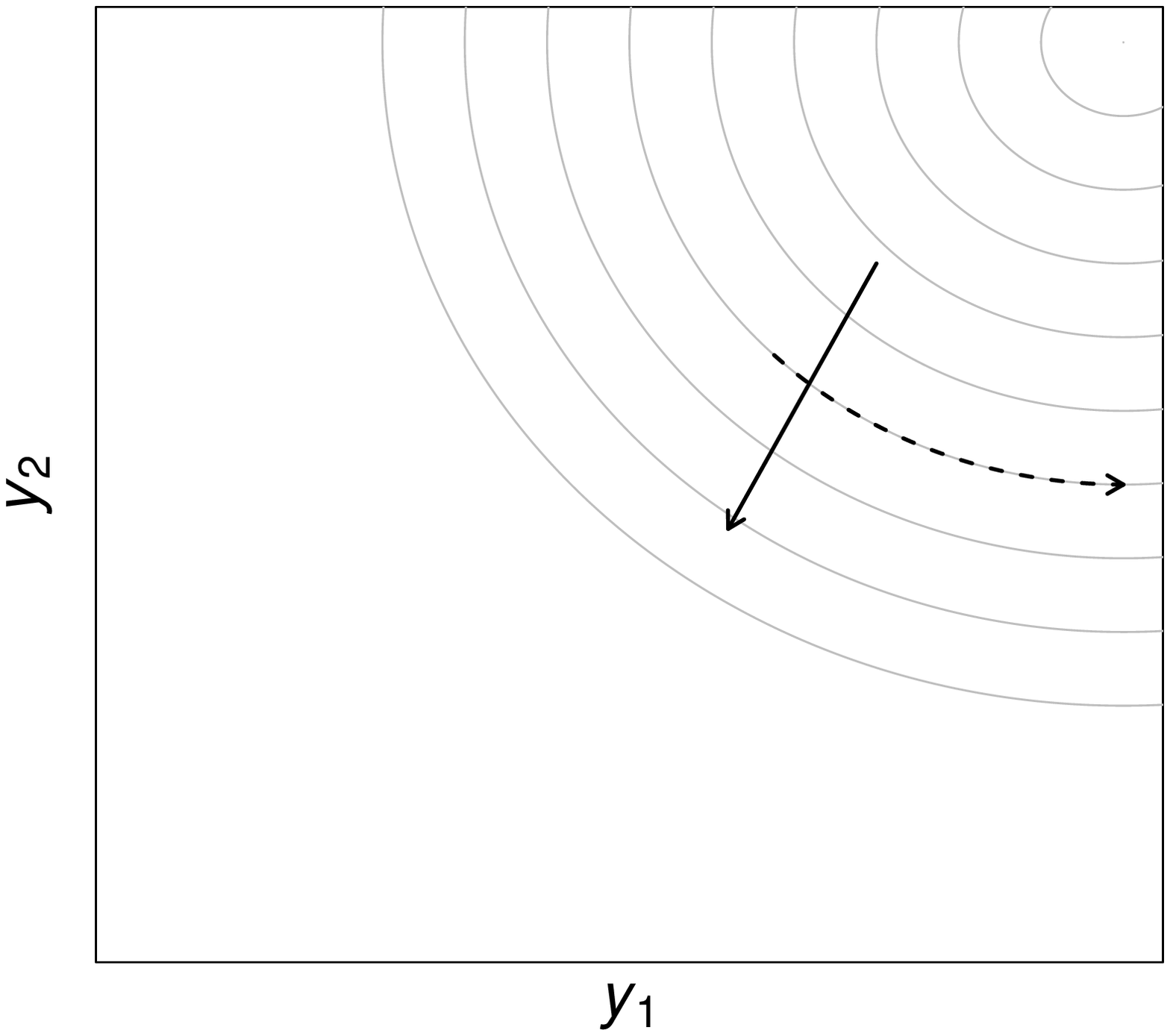}
  \caption{}
  \label{fig:contours}
\end{figure}

\begin{figure}[h]
\includegraphics[width = \textwidth]{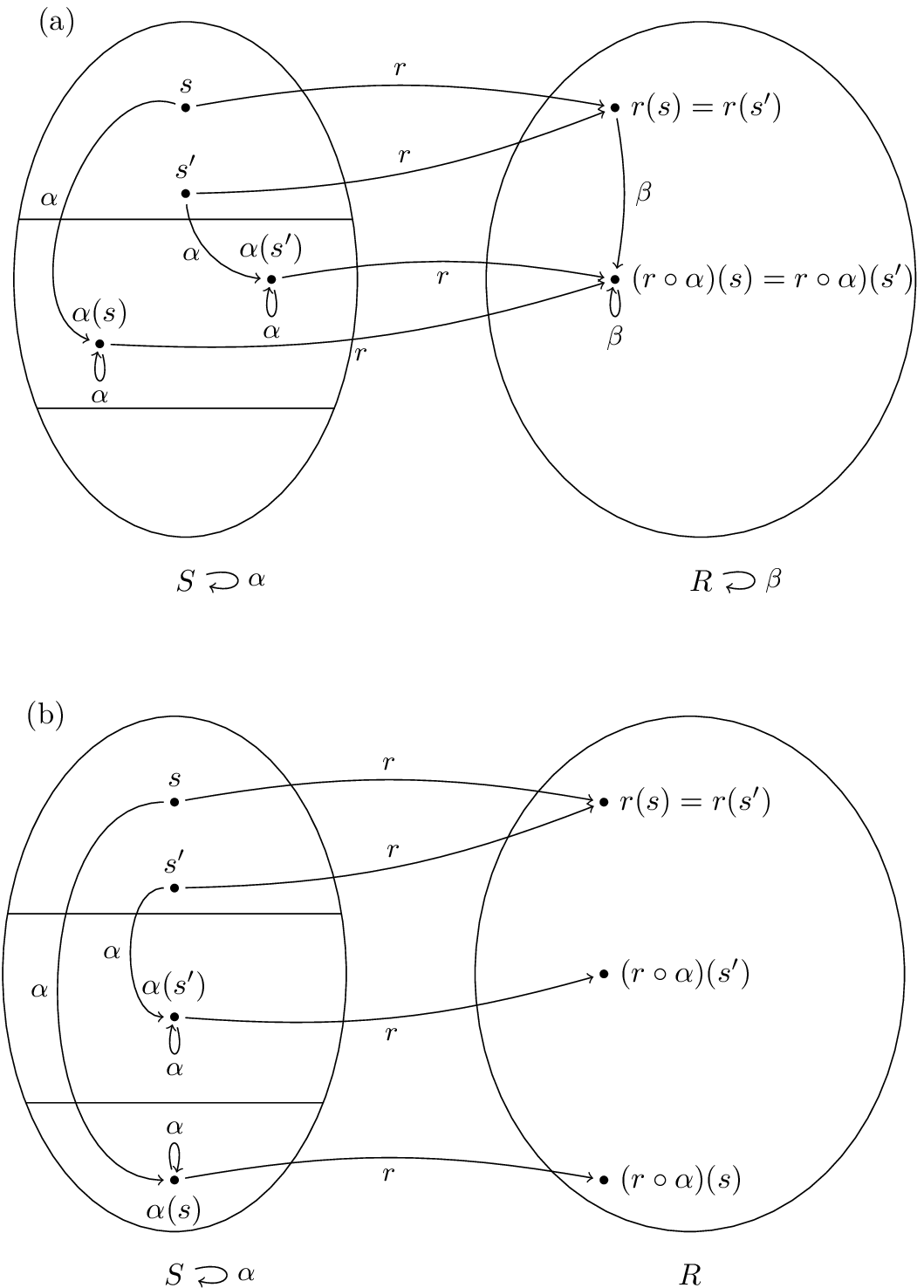}
\caption{}
\label{fig:endomap}
\end{figure}

\begin{figure}[h]
\includegraphics[width = \textwidth]{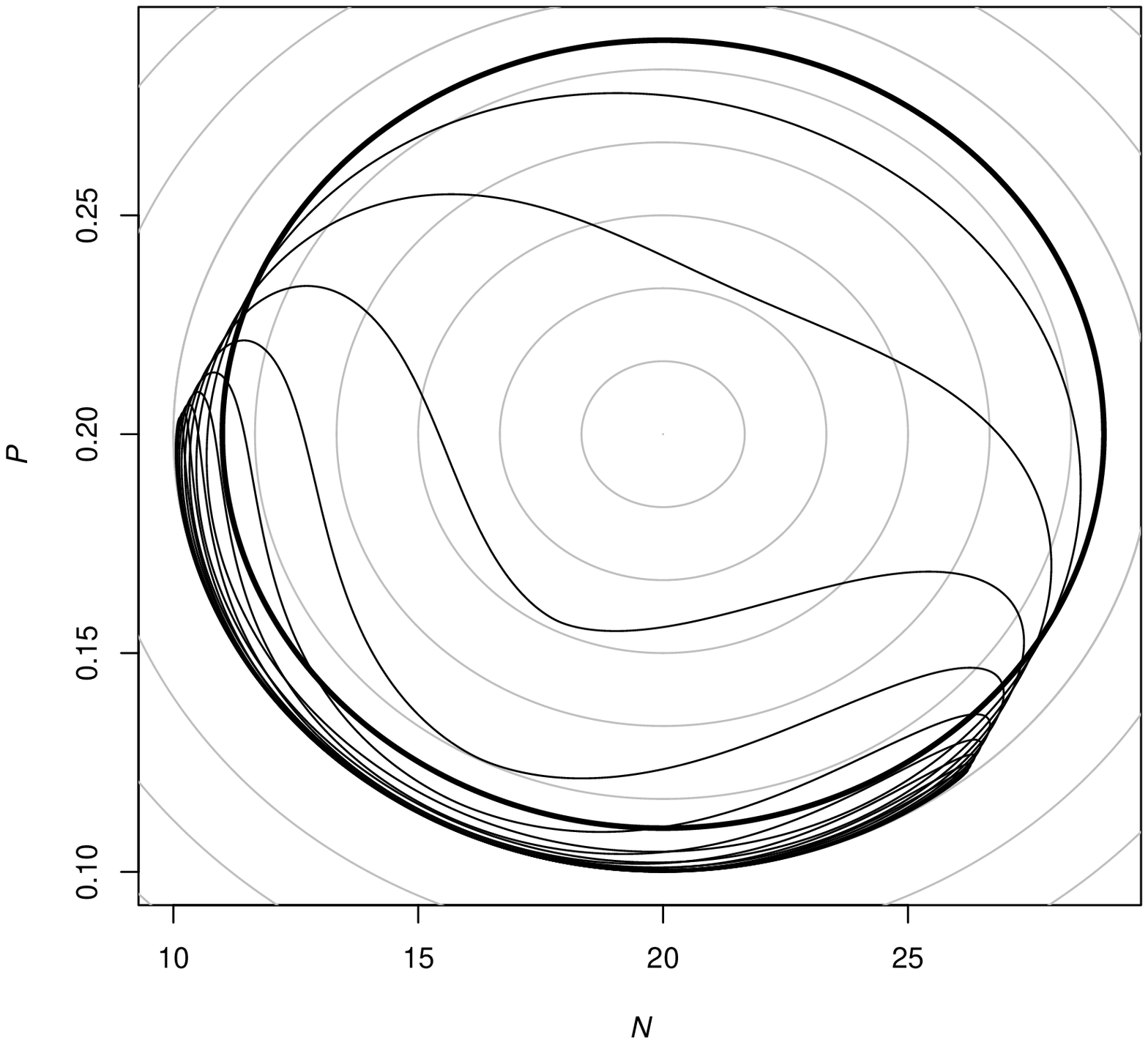}
\caption{}
\label{fig:Maguire}
\end{figure}

\end{document}